\newtheorem{theorem}{Theorem}
\newtheorem{definition}[theorem]{Definition}
\newtheorem{lemma}[theorem]{Lemma}
\title{Essence of Factual Knowledge}
\author{%
	Ruoyu Wang, Daniel Sun, Guoqiang Li, Raymond Wong, Shiping Chen\\
}
\begin{document}
	\maketitle
	
	\section{Introduction}
		Knowledge bases are collections of domain-specific and commonsense facts.
		Recently, the sizes of KBs are rocketing due to automatic extraction for knowledge and facts.
		For example, the number of facts in WikiData is up to 974 million!
		According to our observation, current KBs, especially domain KBs, show strong relevance in relations according to some topics\cite{AMIE,KGist}.
		These patterns can be used to conclude and infer for part of facts in the KBs.
		Therefore, the original KBs can be minimzed by extracting patterns and essential facts.
		
		In this paper, we introduce a framework for extracting knowledge essence and reducing overall volume of KBs by mining semantic patterns in relations.
		Facts are formalized as first-order predicates and patterns are induced as Horn rules.
		
		Table~\ref{tab:example-family} and Rule~(\ref{rule:family-father}), (\ref{rule:family-mother}) show an example of such extraction.
		By extracting the rules from listed facts, both table~\ref{tab:example-sub-father} and \ref{tab:example-sub-mother} can be inferred from other tables and then be removed.
		
		\begin{table*}[t]
			\caption{Partial data on server configurations and status}
			\label{tab:example-family}
			\begin{subtable}[t]{0.2\linewidth}
				\centering
				\caption{parent/2}
				\label{tab:example-sub-parent}
				\begin{tabular}{cc}
					\toprule
					parent  & child\\
					\midrule
					james & harry\\
					lily & harry\\
					harry & sirius\\
					harry & albus\\
					ginny & sirius\\
					ginny & albus\\
					\bottomrule
				\end{tabular}
			\end{subtable}
			\begin{subtable}[t]{0.2\linewidth}
				\centering
				\caption{father/2}
				\label{tab:example-sub-father}
				\begin{tabular}{cc}
					\toprule
					father & child\\
					\midrule
					james & harry\\
					harry & sirius\\
					harry & albus\\
					\bottomrule
				\end{tabular}
			\end{subtable}
			\begin{subtable}[t]{0.2\linewidth}
				\centering
				\caption{mother/2}
				\label{tab:example-sub-mother}
				\begin{tabular}{cc}
					\toprule
					mother & child\\
					\midrule
					lily & harry\\
					ginny & sirius\\
					ginny & albus\\
					\bottomrule
				\end{tabular}
			\end{subtable}
			\begin{subtable}[t]{0.15\linewidth}
				\centering
				\caption{male/1}
				\label{tab:example-sub-male}
				\begin{tabular}{c}
					\toprule
					person\\
					\midrule
					james\\
					harry\\
					albus\\
					sirius\\
					\bottomrule
				\end{tabular}
			\end{subtable}
			\begin{subtable}[t]{0.15\linewidth}
				\centering
				\caption{female/1}
				\label{tab:example-sub-female}
				\begin{tabular}{c}
					\toprule
					person\\
					\midrule
					lily\\
					ginny\\
					\bottomrule
				\end{tabular}
			\end{subtable}
		\end{table*}
	
		\begin{table*}
			\begin{align}
				father(X, Y) & \gets parent(X, Y), male(X)\label{rule:family-father}\\
				mother(X, Y) & \gets parent(X, Y), female(X)\label{rule:family-mother}
			\end{align}
		\end{table*}
		
		The remaining is organized as follows: Section~\ref{sec:rule} analysed properties of rules as equivalence classes.
		Essence extraction problem is formally defined in Section~\ref{sec:problem}.
		And Section~\ref{sec:framework} introduces the basic framework for essence extraction.
		Finally Section~\ref{sec:conclusion} concludes the paper.
	
	\section{Properties of Horn Rules}\label{sec:rule}
		\subsection{Semantic Length and Fingerprint of a Rule}
			First-order Horn rules are adopted in our technique to describe semantic patterns in relations.
			They can further be decomposed into equivalence classes.
			Elements in each of the classes are arguments that are assigned to the same variables, and if some argument is assigned to constants, then the corresponding equivalence class only consists of the argument and the constant.
			For example, Rule~(\ref{rule:family-father}) is decomposed to the following equivalent classes (number in the brackets
denotes the argument index of certain relation, starting from 0):
			\begin{align*}
				X: &\{father[0], parent[0], male[0]\} & father[0]\\
				Y: &\{father[1], parent[1]\} & father[1]
			\end{align*}
			
			The length of a rule is defined by the follwoing equation:
			\begin{equation}
				|r| = \sum_{i} (|C_i| - 1)
			\end{equation}
			where $C_i$ is one of these equivalence classes.
			
			Fingerprints of rules are based on the equivalence classes with labels of arguments in the head that it applies to.
			For example, the last column of the above example shows the label of head arguments.
			
			\begin{lemma}\label{lemma:fingerprint}
				Two rules are semantically equivalent if and only if their fingerprints are identical.
			\end{lemma}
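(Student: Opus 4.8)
The plan is to prove the two directions separately, treating the fingerprint as a complete syntactic invariant of a rule modulo the renaming of its variables, and then showing that this invariant captures exactly semantic equivalence. Throughout I take two rules to be semantically equivalent when they entail the same facts over every database.

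For the ``if'' direction (identical fingerprints $\Rightarrow$ semantically equivalent), I would argue that the fingerprint determines the rule uniquely up to a renaming of variables. Each equivalence class records, for every argument occurrence $pred[i]$ it contains, both the predicate symbol and the argument index; the union of all classes therefore recovers the full set of body atoms, the head labelling recovers the head atom together with the class to which each head position belongs, the partition dictates which positions must be filled by a common variable, and the attached constants pin down the remaining positions. Hence two rules with the same fingerprint differ only in the names chosen for their variables. Since renaming the (implicitly universally quantified) variables of a Horn clause preserves its set of models, the two rules derive the same facts over every database and are semantically equivalent.

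For the ``only if'' direction I would argue the contrapositive and exhibit a distinguishing database. Given a rule $r$, form its canonical database $D_r$ by freezing the body: replace each variable by a fresh distinct constant, keep the existing constants, and read the resulting ground body literals as facts; let $h_r$ be the correspondingly frozen head. The tool here is the standard characterization that a rule $r'$ derives a fact $h$ on a database $D$ exactly when some substitution maps every body atom of $r'$ into $D$ while mapping the head atom of $r'$ to $h$. By construction $r$ derives $h_r$ on $D_r$ via the identity freezing, so if $r'$ were equivalent to $r$ it would also have to derive $h_r$ on $D_r$, and symmetrically $r$ would derive $h_{r'}$ on $D_{r'}$. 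I would then show that if the fingerprints of $r$ and $r'$ differ --- whether in the set of atoms, in the partition into classes, or in the head labelling --- then at least one of $D_r, D_{r'}$ separates them, so one rule fires where the other does not, contradicting equivalence.

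The main obstacle is this ``only if'' direction, and specifically the case analysis on how two fingerprints can disagree. A mismatch in the head labelling or in the predicate symbols is easy to separate, but a mismatch purely in the partition --- two positions merged in one rule yet split in the other --- requires the freezing construction to place distinct constants exactly at the split positions, so that the coarser rule fails to fire on the finer rule's canonical database. Care is also needed with constants, which behave like positions that are pre-frozen and hence cannot be matched by arbitrary values, and with the tacit assumption that each predicate occurs at most once in a body, so that $pred[i]$ unambiguously names an occurrence; allowing repeated predicates would first force the fingerprint to record occurrence multiplicities before the bijection with rules-up-to-renaming, and thus the claimed equivalence, is restored.
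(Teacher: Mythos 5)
Your proposal is correct in substance but takes a genuinely different---and more rigorous---route than the paper on the hard direction. Your ``if'' direction coincides with the paper's sufficiency argument: the fingerprint (atom occurrences, partition into classes, head labels, attached constants) reconstructs the rule up to variable renaming, and renaming preserves models. For the ``only if'' direction, however, the paper simply asserts that semantically equivalent rules ``can be written in syntactically identical form'' and then reads off equal classes; that assertion \emph{is} the entire content of the direction and is left unproven. You instead prove it via the canonical (frozen) database: equivalence forces each rule to derive the other's frozen head over the other's frozen body, yielding homomorphisms in both directions, and your case analysis (predicate sets, partition, head labelling, constants) shows these forced homomorphisms leave no room for the fingerprints to differ. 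This is the Chandra--Merlin-style argument, and it is exactly what the paper's one-line claim tacitly relies on. Your closing caveat is also substantive rather than cosmetic: without the assumption that each predicate occurs at most once in a body, the lemma as stated is false---for instance $h(X) \gets p(X,Y)$ and $h(X) \gets p(X,Y), p(X,Z)$ are semantically equivalent yet have different fingerprints---so your approach additionally exposes a hidden hypothesis (single occurrence per predicate, or rule minimality) that the paper's proof glosses over. What the paper's version buys is brevity; what yours buys is an actual proof of necessity and a precise delimitation of when the lemma holds.
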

			\begin{proof}
				(Necessity)If two rules are semantically equivalent, they can be written in syntactically identical form. Thus equivalence classes of corresponding variables or constants are identical.
				
				(Sufficiency)Each equivalence class tells position of one variable. Therefore, equivalence of all classes ensures that the set of predicates in both rules are identical. The labels of head arguments further determine that the head predicates are the same. Thus, the two rules are identical.
			\end{proof}
		
		\subsection{Search Space for Rules}
			Let $\Omega$ be the search space for first-order Horn rules.
			Some elements in $\Omega$ make no sense and should be excluded.
			If some predicate in the body is identical to the head, then the predicate in the body is redundant.
			These rules are trivial rules.
			If some subset of the body does not share any variable with the remaining part (include the head), then the rule is either redundant nor unsatisfiable.
			The subset is called independent fragment.
			The new search space excluding these two types of rules is written as $\Omega_m$.
	
		\subsection{Extension on Rules}
			\begin{definition}[Limited Variable, Unlimited Variable, Generative Variable]
				A variable is unlimited in some Horn rule $r$ if there is only one argument in $r$ that is assigned to it.
				A variable is limited in $r$ if there exist at least two arguments in $r$ that are assigned to it.
				A variable is generative if there exist arguments in both the head and body of $r$ that are assigned to it.
			\end{definition}
		
			Searching for rules starts from most general forms, i.e. rules only with head predicate and arguments in the predicate are all unique unlimited variables.
			To construct new rules, new equivalence conditions are added to the equivalence classes.
			Syntactically, these operations fell in five extension operations, which is noted by $ext(r)$:
			
			\noindent \textbf{Case 1}: Assign an existing limited variable to some argument.
			
			\noindent \textbf{Case 2}: Add a new predicate with unlimited variables to the rule and then assign an existing limited variable to one of these arguments.
			
			\noindent \textbf{Case 3}: Assign a new limited variable to a pair of arguments.
			
			\noindent \textbf{Case 4}: Add a new predicate with unlimited variables to the rule and then assign a new limited variable to a pair of arguments.
			In this case, the two arguments are not both selected from the newly added predicate.
			
			\noindent \textbf{Case 5}: Assign a constant to some argument.
			
			According to the rule extension, $\forall r, r_e \in \Omega_m$, if $r_e \in ext(r)$, then $r_e$ is the extension of $r$, and $r$ is the origin of $r_e$ (denoted as $r \in ext^{-1}(r_e)$ since one may have multiple origins). Neighbours of a rule in $\Omega_m$ consist of all its extensions and origins. The above extension operations can be used to search on $\Omega_m$. Let $S = \{r|r$ has only a head predicate $p$ and all arguments of $p$ are unlimited variables$\}$, every element in $\Omega_m$ can be searched from some $r_0 \in S$. To prove this we define a property \textit{link} between predicates in a certain rule: If two predicates $p$ and $q$ in a rule $r$ share a limited variable $X$, then $p$ and $q$ are linked by $X$ in $r$, written as $p \diamond_X q$, or in short $p \diamond q$. Moreover, if there is a sequence of predicates $p \diamond w_0 \diamond \dots \diamond w_k \diamond q$, then there is a \textit{linked path} between $p$ and $q$, written as: $p \leftrightarrow^\diamond q$. With this property, we can prove the search completeness as follows:
			
			\begin{lemma}\label{lemma:linked}
				$\forall r \in \Omega_m$, every predicate in $r$ has a linked path with the head of $r$.
			\end{lemma}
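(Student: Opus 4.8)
The plan is to argue by contradiction, reducing the claim to the defining property of $\Omega_m$ that forbids independent fragments. The crucial preliminary observation is that the \emph{link} relation, although defined through \emph{limited} variables, in fact captures every shared variable between two distinct predicates. Indeed, an unlimited variable occupies exactly one argument position in the whole rule, so it cannot appear in two different predicates; consequently, whenever two distinct predicates $p$ and $q$ share any variable at all, that variable necessarily occurs in at least two argument positions and is therefore limited. Hence $p \diamond q$ holds precisely when $p$ and $q$ share a variable.

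With this observation in hand, I would regard the predicates of $r$ (the head together with the body atoms) as the vertices of an undirected graph whose edges are exactly the $\diamond$ relation, so that a linked path $p \leftrightarrow^\diamond q$ is nothing but an ordinary path in this graph. Suppose, for contradiction, that some predicate of $r$ admits no linked path to the head. Let $F$ be the connected component of that predicate; by assumption the head does not lie in $F$, so $F$ is a nonempty subset of the body.

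Next I would show that $F$ is an independent fragment. Since $F$ is a connected component, no edge of the link graph joins a predicate inside $F$ to a predicate outside $F$; by the preliminary observation this means no predicate of $F$ shares any variable with any predicate in the remaining part of the rule, the head included. That is exactly the definition of an independent fragment. But every rule in $\Omega_m$ is free of independent fragments, contradicting $r \in \Omega_m$. Therefore no such predicate exists, and every predicate of $r$ has a linked path to the head.

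The main obstacle, and the step deserving the most care, is this preliminary equivalence between \emph{sharing a variable} and \emph{being linked}: it hinges on reading the definitions of limited and unlimited variables correctly (an unlimited variable occurs in a single argument), and it is precisely what lets the graph-theoretic notion of connected component coincide with the rule-level notion of independent fragment. Once that bridge is in place, the contradiction argument is routine.
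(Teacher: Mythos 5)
Your proof is correct and follows essentially the same route as the paper's: both argue by contradiction, take the set of predicates reachable from the offending predicate via linked paths (your connected component $F$ is exactly the paper's set $P$), and show it constitutes an independent fragment, contradicting $r \in \Omega_m$. Your preliminary observation --- that any variable shared between two distinct predicates must be limited, because an unlimited variable occupies only one argument position in the whole rule --- is the bridge the paper leaves implicit when it asserts that $P$ ``does not share any variables with remaining predicates,'' so making it explicit tightens the same argument rather than giving a different one.
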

			\begin{proof}
				Suppose a predicate $p$ in rule $r$ has no linked path with the head. Then $p$ is not itself the head. Let $P = \{q| p \leftrightarrow^\diamond q\}$, every predicate in $P$ has no linked path with the head. Then the fragment noted by $P$ does not share any variables with remaining predicates. Namely, $P$ denotes an independent fragment in rule $r$. According to the definition of $\Omega_m$, we have $r \not \in \Omega_m$, which contradicts with $ r \in \Omega_m$.
			\end{proof}
			
			\begin{lemma}\label{thm:searchable}
				(Search Completeness)Let $S = \{r|r$ has only a head predicate $p$ and all arguments of $p$ are unlimited variables$\}$, $\forall r \in \Omega_m, \exists r_0, r_1, \dots, r_n \in \Omega_m$, such that $r_0 \in S, r_1 \in ext(r_0), \dots, r \in ext(r_n)$.
			\end{lemma}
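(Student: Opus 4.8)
The plan is to prove the statement by a backward, or reductive, induction: rather than building every rule up from $S$, I will show that every rule $r \in \Omega_m \setminus S$ admits an \emph{origin} $r' \in \Omega_m$ (so that $r \in ext(r')$) that is strictly simpler than $r$ under a well-founded measure. Composing these reductions traces a path from $r$ back to some $r_0 \in S$, and reversing that path yields exactly the chain $r_0, r_1, \dots, r_n$ required by the statement, since $r' \in ext^{-1}(r)$ is by definition the same as $r \in ext(r')$. As the complexity measure I would take $\mu(r)$ to be the triple $(\#\text{body predicates}, |r|, \#\text{constant occurrences})$ ordered lexicographically; every reverse operation below strictly decreases $\mu$, and $\mu$ is minimised precisely on $S$, so the induction bottoms out correctly. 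The base case $r \in S$ is immediate, the chain being degenerate.

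For the inductive step I would invert the five extension operations. If $r$ contains a constant, I undo a Case~5 step by replacing one constant occurrence with a fresh, otherwise unused unlimited variable; this never disconnects the rule, because constants contribute no links (links are defined only through shared \emph{limited} variables), so membership in $\Omega_m$ is preserved. If $r$ is constant-free but still has body predicates, I appeal to Lemma~\ref{lemma:linked}: the link graph whose nodes are the predicates of $r$ and whose edges join predicates sharing a limited variable is connected, hence admits a spanning tree rooted at the head. Choosing a \emph{leaf} body predicate $p$ of this tree, $p$ is attached to the rest of the rule only through its link toward its parent. I then peel $p$ off: first using reverse Case~1 / reverse Case~3 steps to shed any superfluous repeated occurrences until $p$ is attached by a single limited variable $X$ with all its other arguments unlimited, and then using reverse Case~2 (if $X$ still occurs at least twice outside $p$) or reverse Case~4 (if $X$ occurs exactly in $p$ and one other predicate) to delete $p$ entirely. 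Finally, once no body predicate and no constant remain, any repeated variable still surviving inside the head is removed by a reverse Case~3 step that splits the pair into two fresh unlimited variables; iterating this reaches the all-unlimited head, i.e.\ an element of $S$.

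At every reduction I must verify two things: that the resulting $r'$ lies in $\Omega_m$ (no body predicate equal to the head, and no independent fragment), and that $r \in ext(r')$ genuinely holds, i.e.\ that the chosen reverse step really inverts the matching forward case, so in particular the ``existing limited variable'' / ``new limited variable'' preconditions of Cases~1--4 are satisfied. The first of these is where Lemma~\ref{lemma:linked} does the essential work: by always removing a \emph{leaf} of the link tree I guarantee that deleting $p$, or one of its occurrences, cannot sever the connection of any surviving predicate to the head, so no independent fragment is ever created. I expect the main obstacle to be exactly this bookkeeping of connectivity when a predicate shares a limited variable with several others or when one variable links more than two predicates: there I must argue that the occurrences can be removed in an order keeping the remaining rule linked at each step, equivalently that a redundant link can always be cut before a bridging one, which is precisely what the spanning-tree/leaf choice secures. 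The remaining checks, that $\mu$ strictly decreases and that the forward-case preconditions hold, are then routine once the leaf has been fixed.
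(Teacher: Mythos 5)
Your proposal is correct, and it reaches the result by the reverse route of the paper's own proof. The paper argues forward: starting from the head predicate with all-unlimited arguments, it uses Lemma~\ref{lemma:linked} to introduce each body predicate one at a time along a linked path (Cases 2/4), and then adds the remaining variable identifications and constants (Cases 1/3/5) to finish constructing $r$. You argue backward: you set up a well-founded measure, invert the five extension operations, and peel $r$ down to an element of $S$, invoking Lemma~\ref{lemma:linked} through the existence of a spanning tree of the link graph whose leaves can be removed safely. The two arguments share the same key lemma and the same underlying insight (predicates can be attached or detached one at a time without ever creating an independent fragment), so they are dual rather than unrelated; but your version supplies rigor that the paper's sketch omits: an explicit termination argument, an explicit check that every intermediate rule stays in $\Omega_m$ (which the statement actually demands of $r_0,\dots,r_n$ and which the paper never verifies), and an explicit match of each reverse step to the precondition of the corresponding forward case (e.g.\ distinguishing reverse Case~2 from reverse Case~4 by how many times the link variable occurs outside the peeled predicate). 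Two small points to tighten: a head variable occurring three or more times needs a reverse Case~1 step before reverse Case~3 can split the final pair, so your closing stage should mention both; and it is worth noting that reverse Case~4 also rewrites the one occurrence of the link variable outside the peeled predicate into a fresh unlimited variable --- since a fresh variable cannot appear in the head, this is the observation that keeps the ``no body atom identical to the head'' half of the $\Omega_m$ check from being entirely vacuous.
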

			\begin{proof}
				Suppose $p \diamond_X q$ in $r$. During the searching process of $r$, when $p$ is already in a intermediate status $r'$, an extension of $r'$ can be constructed by adding a new predicate $q$ and turning corresponding variables to $X$. Thus, predicate $q$ is introduced into $r'$. Therefore, if $w \leftrightarrow^\diamond q$ and $w$ is already in a intermediate status, then $q$ can be introduced into $r'$. According to Lemma~\ref{lemma:linked}, all predicates in $r$ has linked path with its head. Each predicate can be introduced into the rule iteratively starting from the head predicate where arguments are all different unlimited variables. Other limited variables and constants can be added to the rule by other extension operations to finally construct $r$.
			\end{proof}
		
			Rules with independent fragments will not be constructed starting from $S$, as the extension operations do not introduce new predicates without any shared variables with other predicates.
	
	\section{Problem Definition}\label{sec:problem}
		\begin{definition}[Essential Knowledge Extraction]
			Let $B$ be the original KB, which is a finite set of atoms.
			The extraction on $B$ is a triple $(H, N, C)$, where $H$ (for ``Hypothesis'') is the set of first-order Horn rules, $N$ (for ``Necessary'') is a subset of $B$, and $C$ (for ``Counter Examples'') is a subset of the complement of $B$ subject to CWA.
			$B, H, N, C$ satisfies ($\models$ is logical entailment):
			\begin{itemize}
				\item $N \land H \models (B \setminus N) \cup C$
				\item $\forall e \not \in B \cup C, N \land H \not \models e$
				\item $|N| + |C| + |H|$ is minimal
			\end{itemize}
			where $|N|$ is the number of predicates in $N$, and so be $|C|$.
			$|H|$ is defined as the sum of lengths of all rules in it.
		\end{definition}
		
		\begin{definition}[Minimum Vertex Cover Problem]
			Let $\mathcal{G}_{vc} = \langle V_{vc}, E_{vc}\rangle$ be an undirected graph. A minimum vertex cover $V_c$ of $\mathcal{G}_{vc}$ is a minimum subset of $V_{vc}$ such that $(u, v)\in E_{vc} \implies u \in V_c \lor v \in V_c$.
		\end{definition}
	
		Complexity of essence extraction can be proved by reducing minimum vertex cover problem to relational compression. Let $\mathcal{G}_{vc} = \langle V_{vc}, E_{vc} \rangle$ be the graph in the vertex cover problem.
		By the following settings we create a relational knowledge base aligning with $\mathcal{G}_{vc}$:
		Let $v$ be a unary predicate in $B$ for each $v \in V_{vc}$;
		let $edge$ be a unary predicate in $B$ for edges;
		add two constants $e_{ij}$ and $e'_{ij}$ to $C$ and six predicates $edge(e_{ij})$, $edge(e'_{ij})$, $v_i(e_{ij})$, $v_i(e'_{ij})$, $v_j(e_{ij})$, $v_j(e'_{ij})$ to $B$ for each $(v_i, v_j) \in E_{vc}$;
		add the following predicates to $B$: $edge(c_1), edge(c_2), \dots, edge(c_{2\cdot|E_{vc}|+1})$;
		and add the following constants to $C$: $d_1, d_2, \dots, d_{4\cdot|E_{vc}|+1}$.
		
		\begin{figure}[!t]
			\includegraphics[width=\linewidth]{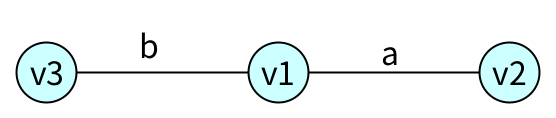}
			\caption{Vertex Cover Example}
			\label{fig:vertex-cover-example}
		\end{figure}
		
		For example, Figure~\ref{fig:vertex-cover-example} shows a graph with three vertices and two edges. The corresponding setting of relational compression is as follows:
		\begin{itemize}
			\item $C = \{a, a', b, b', c_1, \dots, c_5, d_1, \dots, d_9\}$
			\item $B = \{v_1(a)$, $v_1(a')$, $v_2(a)$, $v_2(a')$, $v_1(b)$, $v_1(b')$, $v_3(b)$, $v_3(b')$, $edge(a)$, $edge(a')$, $edge(b)$, $edge(b')$, $edge(c_1)$, $\dots$, $edge(c_5)\}$
		\end{itemize}
		
		By reducibility from minimum vertex cover problem to relational compression we can prove the latter is NP-hard. The details are as follows:
		
		\begin{lemma}\label{lemma:no-axiom-to-edge}
			$edge(X) \leftarrow true$ is not in $H$.
		\end{lemma}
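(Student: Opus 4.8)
The plan is to argue by contradiction with an exchange argument on the cost $|N|+|C|+|H|$. Suppose some optimal extraction $(H,N,C)$ contains $\rho := (edge(X) \leftarrow true)$. Since $\rho$ is exactly the most general rule for $edge$ (the starting element of $S$), every constant $t$ of the Herbrand universe satisfies $N \land H \models edge(t)$. For each dummy constant $d_k$ with $1 \le k \le 4|E_{vc}|+1$ the atom $edge(d_k)$ is absent from $B$, so the condition $\forall e \notin B \cup C,\ N \land H \not\models e$ forces $edge(d_k) \in C$. Thus $\rho$ alone charges at least $4|E_{vc}|+1$ atoms to $C$, and these are attributable to $\rho$: no $d_k$ carries any vertex predicate of $B$, and the only body $d_k$ could otherwise satisfy, namely $edge(d_k)$ itself, belongs to the trivial rule excluded from $\Omega_m$, so deleting $\rho$ erases every $edge(d_k)$ from the entailed set.

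Next I would construct a strictly cheaper extraction $(H',N',C')$. Removing $\rho$ can only lose $edge$-atoms of $B$. Among them, each endpoint fact $edge(e_{ij})$ and $edge(e'_{ij})$ is recovered by the length-one rule $edge(X) \leftarrow v_i(X)$: this rule fires only on constants carrying $v_i$, all of which already carry $edge$ in $B$, so it introduces no new counter-example. By contrast the padding facts $edge(c_1),\dots,edge(c_{2|E_{vc}|+1})$ cannot be produced by any nontrivial rule, since each $c_m$ carries no vertex predicate; I therefore list them explicitly in $N$. Concretely, set $H' = (H \setminus \{\rho\}) \cup R$, where $R$ consists of the rules $edge(X) \leftarrow v_i(X)$ for one fixed endpoint $v_i$ of each edge so that $|R| \le |E_{vc}|$, together with $N' = N \cup \{edge(c_1),\dots,edge(c_{2|E_{vc}|+1})\}$ and $C' = C \setminus \{edge(d_1),\dots,edge(d_{4|E_{vc}|+1})\}$. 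I would then verify that $(H',N',C')$ is still a valid extraction: the vertex facts remain in $N$, the endpoint facts are now derived by $R$, the padding facts sit in $N'$, and the only atoms dropped from the entailed set are the $edge(d_k)$, which are exactly those pruned from $C$.

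Comparing totals, $|C'| \le |C| - (4|E_{vc}|+1)$, while $|N'| \le |N| + (2|E_{vc}|+1)$ and $|H'| \le |H| + |E_{vc}|$, since the rules of $R$ have length one each and $\rho$ has length zero. The net change is therefore at most $|E_{vc}| + (2|E_{vc}|+1) - (4|E_{vc}|+1) = -|E_{vc}| < 0$ whenever $|E_{vc}| \ge 1$, contradicting the optimality of $(H,N,C)$. Hence $\rho \notin H$.

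The delicate step, and the main obstacle, is securing the strict inequality. At the coarse level the $4|E_{vc}|+1$ dummy constants are exactly balanced by the $4|E_{vc}|+1$ edge facts of $B$, so merely relocating every edge fact out of a $\rho$-derivation and into $N$ only ties the cost. Strictness emerges only once one argues that the endpoint facts $edge(e_{ij}),edge(e'_{ij})$ never have to be paid for in $N$, because a length-one vertex rule covers them without over-generating, leaving the $2|E_{vc}|+1$ padding facts $edge(c_m)$ as the sole irreplaceable benefit of $\rho$. Making this airtight requires ruling out every competing, and possibly cascading, derivation of the endpoint facts, of the vertex facts $v_i(\cdot)$, and of the $edge(d_k)$ themselves; controlling these alternatives is precisely what the two padding families, of sizes $2|E_{vc}|+1$ and $4|E_{vc}|+1$, are engineered to guarantee.
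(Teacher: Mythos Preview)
Your argument is correct in spirit but takes a considerably more elaborate route than the paper. The paper's proof is a one-line balance: $\rho$ entails exactly $4|E_{vc}|+1$ true $edge$-facts and, via the dummies $d_1,\dots,d_{4|E_{vc}|+1}$, exactly $4|E_{vc}|+1$ counter-examples, so its net contribution is $(4|E_{vc}|+1)-(4|E_{vc}|+1)-1=-1$, and hence $\rho$ cannot sit in an optimal $H$. You instead construct an explicit alternative $(H',N',C')$ that swaps $\rho$ for vertex rules plus padding facts in $N'$, obtaining a strict saving of $|E_{vc}|$. Your route is in one respect more careful: under the paper's own length formula $|r|=\sum_i(|C_i|-1)$ one gets $|\rho|=0$, so the paper's ``$-1$'' is not obviously justified, and it is precisely your finer exchange that secures a \emph{strict} inequality. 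On the other hand, your verification that $(H',N',C')$ is valid glosses over possible cascading: if $H$ contained a rule with $edge$ in its body (say $v_i(X)\leftarrow edge(X)$), then deleting $\rho$ could drop more than the $edge(d_k)$ from the entailed set, and your assertion that ``the vertex facts remain in $N$'' (they need not have been in $N$ at all) would require support. This is easy to patch---any extra dropped atoms only shrink $C'$ further, and such body-$edge$ rules can themselves be shown non-beneficial---but it deserves an explicit sentence.
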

		\begin{proof}
			Let $arg^+(p) = \{c \in C | p(c) \in B\}$, then $|arg^+(edge)| = 2n + 2n + 1 = 4n+1$, where $n = |E_{vc}|$. Thus, the number of predicates this rule entails is $4n+1$. Taking constants $d_1, \dots, d_{4n+1}$ into consideration, the number of counter examples this rule entails is also $4n+1$. The size reduced is $4n+1 - (4n+1) - 1 = -1$, no actual reduction. Therefore, it does not reduce the size of knowledge base. It is not in $H$.
		\end{proof}
		
		\begin{lemma}\label{lemma:edge-only-provable-by-vi}
			Predicates of $edge$ can only be entailed by the following rules: $edge(X) \leftarrow v_i(X)$.
		\end{lemma}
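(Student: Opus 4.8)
The plan is to first pin down the shape of every $edge$-headed rule that can live in $\Omega_m$, and then show by a size-accounting argument that in a minimal extraction the only ones worth keeping are the single-body-atom rules $edge(X) \leftarrow v_i(X)$. Since Horn entailment of an atom $edge(c)$ forces the deriving rule to have head predicate $edge$, I would begin with the head. The head argument is either a variable $X$ or, via a Case~5 extension, a constant. Crucially, $B$ contains only unary predicates, so the only mechanism that can link a body atom to the head is sharing the head variable itself: two unary atoms on distinct variables can never be linked, and a constant never creates a link. Hence, by Lemma~\ref{lemma:linked}, if the head is $edge(X)$ then every body atom must be of the form $p(X)$, and since $edge(X)$ in the body would be trivial, the body is a (possibly empty) set of vertex atoms $v_{i_1}(X), \dots, v_{i_m}(X)$; if the head is a constant $edge(c)$, any body atom forms an independent fragment, so the only admissible rule is the bodyless $edge(c)\leftarrow true$.

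Next I would compute the net size reduction of each surviving family, using $|r| = m$ for the rule $edge(X)\leftarrow v_{i_1}(X),\dots,v_{i_m}(X)$, where all body atoms and the head lie in the single equivalence class of $X$. The key combinatorial fact is that each edge-constant $e_{ij},e'_{ij}$ satisfies exactly the two vertex predicates $v_i,v_j$ of its edge, while the padding constants $c_k$ satisfy none. Consequently, for $m=1$ the rule $edge(X)\leftarrow v_i(X)$ entails exactly the $2\deg(v_i)$ true atoms $edge(e_{ij}),edge(e'_{ij})$ over the edges incident to $v_i$ (here $\deg(v_i)$ is the number of such edges), producing no counter-examples; for $m=2$ with $\{i,j\}\in E_{vc}$ it entails only the two true atoms of that edge; and for $m=2$ with a non-edge or for $m\ge 3$ it entails nothing, since no constant meets three or more vertex predicates.

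With these counts in hand the minimality argument is short. The bodyless variable-head rule ($m=0$) is already excluded by Lemma~\ref{lemma:no-axiom-to-edge}; the bodyless constant-head rule $edge(c)\leftarrow true$ replaces one fact by a rule of length one, giving net reduction $1-1=0$, so it can be dropped without increasing $|N|+|C|+|H|$; and any rule with $m\ge 2$ entails at most two true atoms at cost $|r|\ge 2$, hence has non-positive net reduction. Moreover such a rule is dominated: replacing it by the $m=1$ rule $edge(X)\leftarrow v_i(X)$ on one of its endpoints derives a superset of its atoms at cost $1$. Therefore no minimal $H$ needs any $edge$-headed rule other than $edge(X)\leftarrow v_i(X)$, which is the claim.

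The step I expect to be the main obstacle is the exchange/domination argument rather than the arithmetic: I must verify that swapping out an $m\ge 2$ rule, or deleting a degenerate one, never breaks the entailment obligations of the extraction — the atoms it derived remain derivable, being a subset of those covered by the replacement $m=1$ rule, and no new counter-examples are forced into $C$, since the $m=1$ rules introduce none. Care is also needed to confirm that the case analysis is exhaustive, i.e. that the unary-only signature together with Lemma~\ref{lemma:linked} genuinely leaves no other admissible $edge$-headed form in $\Omega_m$.
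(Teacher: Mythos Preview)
Your argument is correct and follows the same overall line as the paper: compute the size reduction of the rules $edge(X)\leftarrow v_i(X)$, invoke Lemma~\ref{lemma:no-axiom-to-edge} to kill the bodyless case, and conclude that only the $v_i$ rules remain. The paper's own proof, however, is essentially a two-sentence sketch: it checks that $edge(X)\leftarrow v_i(X)$ gives reduction $2k-1\ge 1$, cites Lemma~\ref{lemma:no-axiom-to-edge}, and then simply asserts ``since there is no other predicate in $B$, $edge$ can only be entailed by some $v_i$''. It never explicitly treats the multi-body rules $edge(X)\leftarrow v_{i_1}(X),\dots,v_{i_m}(X)$ with $m\ge 2$, nor the constant-head rules $edge(c)\leftarrow true$, nor does it appeal to Lemma~\ref{lemma:linked} to pin down the admissible shapes.

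What you add is exactly the missing rigor: you use the unary signature together with Lemma~\ref{lemma:linked} to show every body atom must share the head variable, you compute the entailment counts for each $m$, and you run the domination/exchange step to show $m\ge 2$ rules and constant-head rules can always be dropped or replaced without loss. That is genuinely more complete than the paper's proof, which leaves those cases implicit. The only place to be careful, as you already flag, is the exchange step: you need minimality of $|N|+|C|+|H|$, not merely that alternative rules have non-positive reduction in isolation, and your domination observation (the $m=1$ rule on an endpoint covers a superset of atoms at strictly smaller cost, with no new counter-examples) is the right way to close that.
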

		\begin{proof}
			Let rule $r_i$ be: $edge(X) \leftarrow v_i(X)$, the length of which is 1. Then the number of predicates it entails is $2k$, where $k$ is the number of edges connected to vertex $v_i$. There are no counter examples entailed by this rule. Thus the size it reduces is $2k - |r_i| = 2k - 1$. If $k \ge 1$, this rule can be used to reduce the size of knowledge base.
			
			According to Lemma~\ref{lemma:no-axiom-to-edge}, $edge$ cannot be entailed by axioms, and since there is no other predicate in $B$, $edge$ can only be entailed by some $v_i$.
		\end{proof}
		
		\begin{lemma}\label{lemma:all-edges-provable}
			Let $S = \{edge(e)|edge(e) \in B\} \setminus \{edge(c)|\exists c_i = c\}$. All predicates in $S$ are provable after compression. That is, $S \subseteq R$, where $R$ is the set of all provable predicates.
		\end{lemma}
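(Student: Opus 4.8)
The plan is to argue by a local exchange argument against the minimality requirement $|N|+|C|+|H|$. First I would record the exact condition under which an $edge$ atom is provable. By Lemma~\ref{lemma:edge-only-provable-by-vi}, any $edge$ atom can only be derived by a rule of the form $r_l = edge(X) \leftarrow v_l(X)$. For a constant $e$ belonging to edge $(v_i, v_j)$, the only unary vertex predicates holding on $e$ in $B$ are $v_i$ and $v_j$, since the reduction places exactly the six atoms $edge(e_{ij}), edge(e'_{ij}), v_i(e_{ij}), v_i(e'_{ij}), v_j(e_{ij}), v_j(e'_{ij})$. Hence $edge(e_{ij})$ is provable if and only if $r_i \in H$ or $r_j \in H$, and the same characterisation holds verbatim for $edge(e'_{ij})$. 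Consequently the paired atoms $edge(e_{ij})$ and $edge(e'_{ij})$ are always provable together or not at all, which is the observation that makes the exchange below strictly improving.

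Next I would set up the contradiction. Suppose some $edge(e) \in S$ is not provable after compression; say it belongs to edge $(v_i, v_j)$. By the first extraction condition, every atom of $B$ is either placed in $N$ or entailed by $N \land H$, so $edge(e)$, and by the pairing also its partner, must both lie in $N$, contributing $2$ to $|N|$. I would then modify the extraction by setting $H' = H \cup \{r_i\}$ and $N' = N \setminus \{edge(e_{ij}), edge(e'_{ij})\}$. Since $r_i$ has length $1$ by Lemma~\ref{lemma:edge-only-provable-by-vi}, the value $|H|$ grows by $1$ while $|N|$ drops by $2$, so $|N'|+|C|+|H'| = |N|+|C|+|H| - 1$, strictly smaller than the supposed optimum.

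The remaining work, and the step where the care lies, is verifying that the modified triple is still a valid extraction. The atoms derived by $r_i$ are exactly $edge(c)$ for every constant $c$ with $v_i(c) \in B$; by construction these $c$ are precisely the endpoints $e_{il}, e'_{il}$ of edges incident to $v_i$, and for all of them $edge(e_{il}), edge(e'_{il}) \in B$. Thus $r_i$ forces only atoms already in $B$, so $C$ need not grow and the condition $\forall e' \not\in B \cup C,\ N' \land H' \not\models e'$ is preserved. Moreover, demoting $edge(e_{ij}), edge(e'_{ij})$ out of $N$ is harmless, because they are now provable and hence still available as model atoms for any rule body, so $N' \land H' \models (B \setminus N') \cup C$ continues to hold. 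This produces a strictly cheaper valid extraction, contradicting minimality; therefore every atom of $S$ is provable, i.e. $S \subseteq R$.

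The main obstacle I expect is exactly this validity check rather than the arithmetic: one must confirm that the added rule $r_i$ fires only on constants whose $edge$-atoms lie in $B$, so no counter example is forced into $C$, and that moving the paired atoms from $N$ into the provable set $R$ damages no other entailment. The pairing of $edge(e_{ij})$ with $edge(e'_{ij})$ is what guarantees the full $2$-for-$1$ gain that drives the contradiction.
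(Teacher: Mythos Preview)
Your proposal is correct and follows essentially the same approach as the paper: both argue by contradiction against minimality, using the pairing $edge(e_{ij})\!\sim\!edge(e'_{ij})$ to show that if one such atom is unprovable then so is its duplicate, after which adjoining the rule $edge(X)\leftarrow v_i(X)$ yields a strictly smaller valid extraction. Your version is in fact more careful than the paper's, since you explicitly verify that the modified triple remains a valid extraction (no new counter examples, no lost entailments), a point the paper leaves implicit.
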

		\begin{proof}
			According to Lemma~\ref{lemma:edge-only-provable-by-vi}, proof of $edge(e) \in S$ relies only on predicates of $v_i$. No matter predicates of $v_i$ is provable or not, the rules of $edge(X) \leftarrow v_i(X)$ can always be applied to prove $edge(e) \in S$. Suppose $\exists edge(e) \in S$ such that $edge(e) \not \in R$. Then there is another predicate $edge(e') \in S$ and $edge(e') \not \in R$, where $e$ and $e'$ correspond to some edge in $E_{vc}$ and its duplicate, since these two predicates are both entailed by some rule $edge(X) \leftarrow v_i(X)$ if one of them is entailed by the rule. Then a new rule can be applied to entail these two predicates to further reduce the size of given result.
			However, according to definition of relational compression, output cannot be further reduced.
			Contradiction occurs.
		\end{proof}
		
		\begin{lemma}\label{lemma:vc-minimally-entails-edges-in-compression}\setstretch{1.1}
			Let $V_c$ be the solution of minimum vertex cover problem.
			Let $H_{vc}$ be a rule set and $H_{vc} = \{edge(X) \leftarrow v(X) | v \in V_c\}$.
			Let $\bar{H}_{vc}$ be a rule set and $\bar{H}_{vc} = \{edge(X) \leftarrow v(X) | v \not \in V_c\}$.
			Then $H_{vc} \subseteq H$ and $\bar{H}_{vc} \cap H = \varnothing$.
		\end{lemma}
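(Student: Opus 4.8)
The plan is to reduce the claim to an explicit accounting of the objective $|N|+|C|+|H|$ as a function of which rules of the form $edge(X)\leftarrow v(X)$ appear in $H$. Write $U = \{v \in V_{vc} \mid (edge(X)\leftarrow v(X)) \in H\}$ for the set of vertices whose rule is used, and set $n = |E_{vc}|$. By Lemma~\ref{lemma:all-edges-provable} every real edge atom $edge(e_{ij}), edge(e'_{ij})$ must be provable after compression, and by Lemma~\ref{lemma:edge-only-provable-by-vi} such an atom is provable only through some rule $edge(X)\leftarrow v_i(X)$ with $v_i(e_{ij})\in B$. Since only $v_i$ and $v_j$ are applied to $e_{ij}$, this forces $v_i\in U$ or $v_j\in U$ for every edge $(v_i,v_j)$; that is, $U$ is a vertex cover of $\mathcal{G}_{vc}$. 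The goal is then to show that minimality of the objective forces $U$ to be a \emph{minimum} cover, and to relate $U$ to the given $V_c$.

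First I would compute each term of the objective for a cover $U$. Each rule $edge(X)\leftarrow v(X)$ has length $1$, so this family contributes exactly $|U|$ to $|H|$. None of these rules introduces a counter-example: every constant $c$ with $v(c)\in B$ (for $v\in U$) is some $e_{ij}$ or $e'_{ij}$, for which $edge(c)\in B$ already, so nothing outside $B$ is entailed and the family contributes $0$ to $|C|$. On the necessity side, with $U$ a cover all $2n$ real edge atoms become provable and may leave $N$, whereas the $4n$ atoms of $v$-type and the $2n+1$ dummy atoms $edge(c_1),\dots,edge(c_{2n+1})$ admit no beneficial derivation (by Lemma~\ref{lemma:no-axiom-to-edge} for the $c_i$, since no $v(c_i)\in B$, and because any rule generating the $v$-atoms would incur many counter-examples). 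Hence $|N| = 6n+1$, $|C|=0$, and the objective equals $6n+1+|U|$: a fixed base plus the cover size.

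Given this formula, optimality of $H$ forces $|U|$ to be as small as any attainable cover size, namely $|U| = |V_c|$, so $U$ is itself a minimum vertex cover. The two desired relations are now jointly equivalent to the single equality $U = V_c$: the inclusion $H_{vc}\subseteq H$ says precisely $V_c \subseteq U$, and $\bar H_{vc}\cap H = \varnothing$ says precisely $U \subseteq V_c$. Thus the whole lemma comes down to identifying the minimum cover realised by $H$ with the given $V_c$.

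The hard part will be exactly this identification, and it is genuinely delicate when the minimum vertex cover is not unique: the accounting only shows that $H$ realises \emph{some} minimum cover, so if $\mathcal{G}_{vc}$ has several minimum covers the literal containments can fail for an arbitrarily chosen $V_c$. The honest resolution is to read $V_c$ off the optimal $H$ (or to assume the minimum cover is unique), so that $V_c$ and $U$ name the same set, and I would state this identification explicitly in the proof. The remaining technical obligation is to confirm the decomposition of the objective is tight, i.e.\ that no rule outside the $edge(X)\leftarrow v(X)$ family and no trade-off that makes a $v$-atom provable at the price of extra rule length and counter-examples can beat the bound $6n+1+|V_c|$; a short case check on candidate rule shapes (for instance, $edge(X)\leftarrow v_i(X),v_j(X)$ has length $2$ but entails only $2$ atoms, giving zero net reduction) should close this gap.
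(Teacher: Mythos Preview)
Your argument follows the same route as the paper: invoke Lemmas~\ref{lemma:edge-only-provable-by-vi} and~\ref{lemma:all-edges-provable} to see that the rules of the form $edge(X)\leftarrow v(X)$ present in $H$ must correspond to a vertex cover, and then use optimality of $(H,N,C)$ to force that cover to be minimum. The paper's own proof is a three-line sketch that simply asserts this correspondence (``all edges are provable and only provable by vertices, and this is equal to the setting that all edges are covered\ldots''); you make it quantitative by computing the objective as $6n+1+|U|$ and spot-checking that alternative rule shapes give no net gain. You also correctly flag a point the paper elides: when the minimum vertex cover is not unique, the literal inclusions $H_{vc}\subseteq H$ and $\bar H_{vc}\cap H=\varnothing$ can fail for an arbitrarily fixed $V_c$, and one must read $V_c$ off $H$ (or assume uniqueness). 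This is a genuine refinement of the paper's argument rather than a different method.
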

		\begin{proof}
			According to Lemma~\ref{lemma:edge-only-provable-by-vi} and \ref{lemma:all-edges-provable}, all edges are provable and only provable by vertices, and this is equal to the setting that all edges are covered and only covered by vertices for minimum vertex cover problem. Thus $H_{vc}$ entails $S$ in a minimum cost.
			$H_{vc} \subseteq H$ and $\bar{H}_{vc} \cap H = \varnothing$.
		\end{proof}
		
		\begin{theorem}\label{np}
			Relational compression is NP-hard.
		\end{theorem}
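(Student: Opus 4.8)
The plan is to read the construction preceding the theorem as a polynomial-time many-one reduction from Minimum Vertex Cover, and to argue that an optimal relational compression of the constructed instance literally encodes a minimum vertex cover of $\mathcal{G}_{vc}$. Since Minimum Vertex Cover is NP-hard, transporting an optimal solution back across the reduction establishes NP-hardness of relational compression. Concretely, I would fix $n = |E_{vc}|$, note that the instance has $6n$ edge-and-vertex atoms, $2n+1$ padding atoms $edge(c_k)$, and $4n+1$ counter-example constants $d_m$, and observe that all of these can be written down in time linear in $|V_{vc}| + |E_{vc}|$, so the reduction is polynomial.

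The central correspondence I would establish first is purely combinatorial: for any $U \subseteq V_{vc}$, the rule set $\{edge(X) \leftarrow v(X) : v \in U\}$ entails $edge(e_{ij})$ (and likewise $edge(e'_{ij})$) precisely when $v_i \in U$ or $v_j \in U$, because $v_i(e_{ij})$ and $v_j(e_{ij})$ are the only $v$-atoms holding on $e_{ij}$. Hence such a rule set entails every atom of the set $S$ of genuine edge atoms if and only if $U$ is a vertex cover of $\mathcal{G}_{vc}$. With this in hand, the earlier lemmas carry the structural part of the argument: by Lemma~\ref{lemma:all-edges-provable} every atom of $S$ must be provable in any optimal compression, and by Lemma~\ref{lemma:edge-only-provable-by-vi} together with Lemma~\ref{lemma:no-axiom-to-edge} the only rules capable of proving them are of the shape $edge(X) \leftarrow v_i(X)$. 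Thus the hypothesis $H$ of an optimal compression is indexed by some $U \subseteq V_{vc}$, and that $U$ must be a vertex cover.

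Next I would compute the objective explicitly to show the reduction is faithful to optimality. Each rule $edge(X) \leftarrow v_i(X)$ has length $1$ and introduces no counter-examples, since it derives $edge(c)$ only for constants $c$ with $v_i(c) \in B$, and every such atom is already present in $B$. Consequently the genuine edge atoms $S$ leave $N$ while the $v_i$-atoms and the padding atoms $edge(c_k)$ remain, giving $|N| + |C| + |H| = (6n+1) + |U|$, a quantity that depends on the instance only through $n$ plus the rule count $|U|$. The objective is therefore strictly increasing in $|U|$, so minimizing it minimizes $|U|$; by Lemma~\ref{lemma:vc-minimally-entails-edges-in-compression} the minimizing $U$ is exactly a minimum vertex cover $V_c$. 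At the decision level this means $\mathcal{G}_{vc}$ has a vertex cover of size at most $k$ iff $B$ admits a compression of cost at most $(6n+1)+k$.

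The main obstacle will be excluding the compressions that a family-style KB would never produce but which could short-circuit the reduction: a rule attempting to compress the $v_i$-relations themselves, or the trivial axiom $edge(X) \leftarrow true$ covering the padding atoms. The padding atoms $edge(c_k)$ and the constants $d_m$ are sized precisely so that, as quantified in Lemma~\ref{lemma:no-axiom-to-edge}, the axiom yields net reduction $-1$ and is therefore never selected, which forces every provable edge atom to be covered vertex-by-vertex; I would lean on this calibration together with Lemma~\ref{lemma:edge-only-provable-by-vi} to rule out the remaining rule shapes and confirm that no $v_i$-atom can be profitably derived. Once these alternatives are closed off, a polynomial-time solver for optimal relational compression would read $V_c$ off of $H$ in polynomial time, contradicting NP-hardness of Minimum Vertex Cover unless $\mathrm{P} = \mathrm{NP}$, which completes the proof.
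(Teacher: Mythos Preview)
Your proposal is correct and follows the same approach as the paper: you carry out the polynomial reduction from Minimum Vertex Cover described before the theorem and invoke Lemmas~\ref{lemma:no-axiom-to-edge}--\ref{lemma:vc-minimally-entails-edges-in-compression} to conclude that the rules of shape $edge(X)\leftarrow v(X)$ appearing in an optimal $H$ are indexed exactly by a minimum vertex cover. Your explicit cost computation $(6n+1)+|U|$ and the accompanying decision-version statement are a welcome elaboration of what the paper leaves implicit.
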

		\begin{proof}
			Let $V_c$ be the set of minimum vertex cover of $\mathcal{G}_{vc}$.
			According to the lemmas above, $V_c = \{v \in V_{vc} | \exists edge(X) \leftarrow v(X) \in H\}$.
			All the operations involved with reducibility are with polynomial cost.
			Thus minimum vertex cover problem can be polynomially reduced to relational compression. Relational compression is NP-hard.
		\end{proof}
	
	\section{Extraction Framework}\label{sec:framework}
	
		To tell whether a fact is provable by others, we employ a directed graph $\mathcal{G} = \langle V, E \rangle$ to encode dependency among facts with respect to inference.
		$V = B \cup \{\top\}$, where each vertex is either a fact in $B$ or an assertion of truth under no condition.
		$(b, h) \in E$ if $b$ is involved in the proof of $h$ by some rule.
		$(\top, h) \in E$ if $h$ can be inferred by some rule with empty body.
		The extraction for essence is given by Algorithm~\ref{alg:extraction}.
	
		\begin{algorithm}[!t]
			\caption{Essence Extraction}
			\label{alg:extraction}
			\begin{algorithmic}[1]
				\Require Knowledge Base $B$
				\Ensure Summarization on $B$: $(H, N, C)$
				\State $H \gets \varnothing$
				\State $C \gets \varnothing$
				\State $\mathcal{G} \gets \langle B \cup \{\top\}, \varnothing \rangle$
				\While {$r \gets findSingleRule(B)$}
					\State $H \gets H \cup \{r\}$
					\State $C \gets C \cup E^-_r$
					\State Update graph $\mathcal{G}$ with respect to $r$
				\EndWhile
				\State $cc \gets CoverCycle(\mathcal{G})$
				\State $N \gets \{h \in V \setminus \{\top\}| \forall b \in V, (b, h) \not \in E\} \cup cc$
				\State \Return $(H, N, C)$
			\end{algorithmic}
		\end{algorithm}
		
		If the dependency graph is a DAG, then essential predicates are represented by the vertices with zero in-degree.
		However, if cycles appear in $\mathcal{G}$, then at least one vertex in each cycle should be included in $N$.
		This assertion is proved bellow:
		
		\begin{lemma}\label{lemma:single-cycle}
			If some cycle in $\mathcal{G}$ is not overlapping with other cycles, then at least one vertex should be included in $N$.
		\end{lemma}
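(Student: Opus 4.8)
The plan is to argue by contradiction, exploiting the well-foundedness of any valid derivation. Write the isolated cycle as $v_1 \to v_2 \to \dots \to v_m \to v_1$, where the edge $(v_{i-1}, v_i) \in E$ records that $v_{i-1}$ is used in proving $v_i$ under the rules in $H$ (indices taken cyclically). Suppose, toward a contradiction, that none of $v_1, \dots, v_m$ is placed in $N$. Then by the extraction requirement that $N \land H \models (B \setminus N) \cup C$, every $v_i$ must be entailed, i.e. admit a finite derivation from $N \cup \{\top\}$ using the rules of $H$.

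First I would make precise how provability is tracked by $\mathcal{G}$: a fact $h \notin N$ can enter a derivation only once every fact feeding its proof has already been established, so the set of provable facts is built up by a well-founded, finite process rooted at $N$ and $\top$. Consequently, among the cycle vertices there must be a \emph{first} one, say $v_i$, to be established in such a process. The key step is to observe that establishing $v_i$ relies on its in-neighbour $v_{i-1}$ along the cycle, so $v_{i-1}$ must already be established by the time $v_i$ is; but $v_{i-1}$ is itself a cycle vertex, contradicting the choice of $v_i$ as the first. This contradiction shows that at least one $v_i$ must be supplied directly, hence must belong to $N$.

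The main obstacle is justifying that the reliance recorded by the cycle edge cannot be circumvented, i.e. that $v_i$ has no alternative derivation bypassing $v_{i-1}$ that would let the whole cycle bootstrap itself. This is exactly where the non-overlapping hypothesis is used: since the cycle shares no vertex with any other cycle, there is no second cyclic route through the $v_i$, so any self-referential derivation of a cycle vertex must traverse this unique cycle and inherit its circular dependency; and any derivation that leaves the cycle would do so through a vertex already handled as a zero-in-degree source in $N$, again contradicting the assumption that no cycle vertex is kept. I would therefore structure the argument so that isolation rules out both an alternate cyclic bypass and a silent external entry point, leaving the circular dependency genuinely unbreakable without one vertex in $N$. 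Finally I would record the converse direction for context: keeping any single $v_i$ in $N$ lets the remaining cycle vertices be derived by following the edges around the cycle, confirming that one vertex is both necessary and sufficient.
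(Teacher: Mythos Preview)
Your argument is essentially the same as the paper's: assume no cycle vertex lies in $N$ and derive a contradiction from the circular dependency. The paper phrases it via a least-fixed-point notion of ``guaranteed provable'' (a vertex is guaranteed provable iff it is in $N$ or \emph{all} of its in-neighbours are), while you phrase it via well-foundedness and a ``first cycle vertex established'' argument; these are two standard presentations of the same induction.

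One remark on where you spend extra effort. Because the paper's provability criterion is conjunctive (every in-neighbour must already be provable), the mere presence of the cycle edge $(v_{i-1},v_i)$ already blocks $v_i$ from becoming provable while $v_{i-1}$ is not---no discussion of ``alternative derivations bypassing $v_{i-1}$'' is needed, and the non-overlapping hypothesis plays no role beyond isolating which vertices are under discussion. Your paragraph arguing that isolation rules out an alternate route is therefore unnecessary under the paper's model, and the specific sentence ``any derivation that leaves the cycle would do so through a vertex already handled as a zero-in-degree source in $N$, again contradicting the assumption that no cycle vertex is kept'' does not quite parse: an external in-neighbour of $v_i$ being provable would not contradict anything about cycle vertices. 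If you drop that passage and simply invoke that $v_{i-1}$ is an in-neighbour of $v_i$, your proof matches the paper's almost line for line.
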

		\begin{proof}
			A vertex in the dependency graph is guaranteed provable if it is in $N$ or all of its in-neighbours are guaranteed provable.
			In the following proof, we assume that all other parts in $\mathcal{G}$ are guaranteed provable except the cycles.
			If none of vertices in a single cycle (not overlapping with other cycles) is included in $N$, then for each of these vertices, there is one in-neighbour not guaranteed provable.
			Thus, none of vertices in the cycle is guaranteed provable.
			At least one vertex should be selected in $N$.
		\end{proof}
		
		\begin{lemma}\label{lemma:overlapping-cycle}
			If some cycles in $\mathcal{G}$ are overlapping, then at least one vertex should be included in $N$.
		\end{lemma}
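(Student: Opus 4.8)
The plan is to reduce the overlapping case to essentially the same circular-dependency argument used in Lemma~\ref{lemma:single-cycle}, the only real difference being that the cyclic skeleton is replaced by the union of the overlapping cycles. I keep the standing assumption from Lemma~\ref{lemma:single-cycle}: every vertex of $\mathcal{G}$ outside the cycles under consideration is already guaranteed provable, and I recall that a vertex is guaranteed provable exactly when it lies in $N$ or all of its in-neighbours are guaranteed provable. Let $U$ be the union of the overlapping cycles. The structural fact I would isolate first is that $U$ is \emph{self-dependent}: every $v \in U$ lies on at least one of the cycles, and therefore has its predecessor along that cycle as an in-neighbour, which again belongs to $U$. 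Hence each vertex of $U$ has at least one in-neighbour inside $U$ — exactly the property that drove the single-cycle proof.

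Next I would argue by contradiction, assuming that $N$ contains no vertex of $U$. Because ``guaranteed provable'' is the least set closed under the stated rule, I would make the induction over the stages of its fixpoint construction explicit, rather than ``walking around the cycle'' (which becomes ambiguous once cycles share vertices). Suppose some vertex of $U$ were guaranteed provable, and let $v$ be the first vertex of $U$ to enter the guaranteed-provable set, say at stage $k$. Since $v \notin N$ by assumption, $v$ can only enter at stage $k$ because all of its in-neighbours are already guaranteed provable at stage $k-1$. In particular its in-neighbour $u \in U$ is guaranteed provable strictly before $v$, contradicting the choice of $v$ as the first vertex of $U$ to qualify. Therefore no vertex of $U$ is guaranteed provable.

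Finally I would close the argument with the validity requirement of the extraction. Every vertex of $U$ is a fact of $B$ (none of them is $\top$) and, under the assumption, none is in $N$, so all of $U$ lies in $B \setminus N$ and must be entailed by $N \land H$; yet we have just shown that none of these facts is guaranteed provable once $U \cap N = \varnothing$. This contradiction forces $U \cap N \neq \varnothing$, i.e. at least one vertex of the overlapping cycles must be placed in $N$. The main obstacle, and the reason this deserves a separate lemma, is precisely the ``first vertex'' step: with a single non-overlapping cycle one can literally follow the unique predecessor chain, whereas with shared vertices a vertex may have several in-neighbours spread across different cycles, so the circularity must be captured through the monotone fixpoint ordering rather than a single traversal. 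Once the self-dependency of $U$ is established, however, the remainder is a verbatim copy of the single-cycle reasoning.
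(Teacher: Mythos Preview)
Your argument is correct. You isolate the union $U$ of the overlapping cycles, observe that every vertex of $U$ has an in-neighbour in $U$, and then run a least-fixpoint (stage) argument to show that if $U\cap N=\varnothing$ no vertex of $U$ ever becomes guaranteed provable; the contradiction with the entailment requirement on $B\setminus N$ then forces $U\cap N\neq\varnothing$. The remark that $\top$ cannot lie on any cycle (it has no in-edges) justifies $U\subseteq B$.

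The paper reaches the same conclusion but organises the proof differently. Instead of treating $U$ uniformly, it performs a case split on \emph{where} a vertex of $N$ might sit: (i) nowhere in the two overlapping cycles, (ii) in the non-overlapping part of one cycle, (iii) in the overlapping part. Case~(i) is exactly your contradiction (stated without the fixpoint formalism); cases~(ii) and~(iii) additionally discuss how provability propagates once a vertex is placed in $N$, effectively reducing the residual situation back to the single-cycle lemma. Your approach is tighter and handles any number of overlapping cycles at once via the self-dependency of $U$ and the monotone fixpoint order; the paper's approach is more concrete for two cycles and gestures at ``cases are similar'' for more. The substantive content---no $N$-vertex in the cycles implies nothing in them is guaranteed provable---is shared.
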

		\begin{proof}
			Suppose two cycles are overlapping in $\mathcal{G}$.
			If none of vertices in these cycles is in $N$, then none of them are guaranteed provable.
			If one of the vertices in the non-overlapping part is in $N$, then from this vertex to the one before intersection, all of these vertices are guaranteed provable.
			The other cycle is remained equivalent to circumstances of non-overlapping cycle and at least one of these vertices should be in $N$.
			If one of the vertices in the overlapping part is in $N$, then both cycles are guaranteed provable.
			In this case, still, at least one vertex is selected in each cycle.
			Cases are similar for more than two over lapping cycles.
		\end{proof}
		
		\begin{lemma}
			If there are cycles in the dependency graph, then at least one vertex should be included in $N$.
		\end{lemma}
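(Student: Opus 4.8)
The plan is to obtain the general statement as an immediate consequence of the two preceding lemmas, via a simple dichotomy on the structure of cycles in $\mathcal{G}$. First I would fix an arbitrary cycle $c$ appearing in the dependency graph and observe that exactly one of two situations holds: either $c$ shares no vertex with any other cycle, or $c$ shares at least one vertex with some other cycle. These two cases are mutually exclusive and jointly exhaustive, so no further cases arise and the whole argument reduces to dispatching each of them to a result we may already assume.

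In the first case I would invoke Lemma~\ref{lemma:single-cycle} directly: a cycle that does not overlap any other cycle forces at least one of its vertices into $N$, since otherwise every vertex of $c$ would have an in-neighbour that is not guaranteed provable, and the cyclic dependency would leave no vertex of $c$ guaranteed provable. In the second case I would invoke Lemma~\ref{lemma:overlapping-cycle}, which already treats two mutually overlapping cycles and, by its closing remark, arbitrarily many, concluding that each such cycle must still contribute at least one vertex to $N$. In either case the chosen cycle $c$ contains a vertex that must belong to $N$.

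Since the mere existence of a cycle in $\mathcal{G}$ places us in one of these two cases, it follows that whenever $\mathcal{G}$ contains a cycle at least one vertex must be selected into $N$, which is precisely the claim. The reduction is thus nothing more than a case split that defers all real work to the two lemmas already established.

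The step I expect to require the most care is confirming that the dichotomy is genuinely exhaustive once \emph{overlap} is read as sharing of vertices rather than of edges, and that the notion of being \emph{guaranteed provable} used in the two lemmas composes consistently: each lemma's argument assumes that all of $\mathcal{G}$ outside the cycles under consideration is already guaranteed provable, and I would verify that stitching the cases together does not smuggle in a circularity in this assumption. Given that the earlier lemmas may be taken as proved, this amounts to a routine check rather than a new idea.
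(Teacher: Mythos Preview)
Your proposal is correct and follows exactly the paper's own approach: the paper's proof simply reads ``It is clear by Lemma~\ref{lemma:single-cycle} and \ref{lemma:overlapping-cycle},'' which is precisely your dichotomy-and-defer argument spelled out in more detail. Your added remarks about the exhaustiveness of the case split and the consistency of the ``guaranteed provable'' assumption are reasonable sanity checks but go beyond what the paper itself supplies.
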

		\begin{proof}
			It is clear by Lemma~\ref{lemma:single-cycle} and \ref{lemma:overlapping-cycle}.
		\end{proof}
		
		In the framework, two components may be implemented in different strategies according to specific domains: \textit{findSingleRule} and \textit{CoverCycle}.
		To implement \textit{findSingleRule}, pruning techniques are needed as the search space is large and useful candidates are sparse in the space.
		Given that semantic correlations may be strong in domain specific KBs, cycles are predicted to be large and frequent.
		Therefore, efficient coverage procedure is also required in the framework.
		
	\section{Restore the Original KB}
		As the dependency graph implies, if all the in-neighbours of some vertex in $\mathcal{G}$ is in the KB, the vertex can be inferred by applying some rule in $H$.
		Thus, in order to restore the original KB, we can iteratively apply each rule on current database until there is no more records inferred.
		Inference by a single rule can be done without full join in the relational data model.
		The algorithm is shown in Algorithm~\ref{alg:single-inference}.
			
		\begin{algorithm}[!t]
			\caption{Inference by a Single Pattern}
			\label{alg:single-inference}
			\begin{algorithmic}[1]
				\Require KB $B$
				\Require Rule $r$
				\Ensure Inferred target records $T$
				\State $C \gets $ equivalence classes of columns determined by $r$
				\State $B' \gets \varnothing$
				\For {body functor $f$ in $r$}
					\State $B_r \gets $ predicates in $f$ that complies to constant restrictions in $r$
					\State $B' \gets B' + B_r$ \label{line:preparing-data}
				\EndFor
				\For {equivalence class $c \in C$}
					\State Filter $B'$ by columns in $c$
					\State Update indices in $B'$
				\EndFor
				\State $T \gets \varnothing$
				\For {grouped row $w$ in $B'$}
					\State $w_t \gets $ empty relation for inferred records
					\For {target column $l$}
						\State $l_c \gets $ one condition column in $r$ that is equivalent to $l$
						\State $w_t.l \gets l_c$
					\EndFor
					\If {there are unassigned target columns in $w_t$}
						\State Assign each of these columns of all values in $B$
					\EndIf
					\State $T \gets T + w_t$
				\EndFor
				\State \Return $T$
			\end{algorithmic}
		\end{algorithm}
		
		The cost of single inference is proportional to the number of equivalence classes and to the size of relevant relations.
		The number of equivalence classes is proportional to the number of columns in $B'$.
		The cost for single inference is:
		\begin{equation*}
			|C| \cdot |B'| \propto |B'| \cdot \sum_{f \in B'} \phi(f)
		\end{equation*}
		where $\phi(f)$ is the arity of functor $f$.
		From the implication of the dependency graph, inference operations are the same as visiting vertices along paths in the graph.
		Thus, the maximum number of iterations is no larger than the maximum length of simple paths in $\mathcal{G}$.
		The overall cost of decompression is:
		\begin{equation*}
			d_{max} \cdot |P| \cdot |B| \cdot \sum_{f \in B} \phi(f)
		\end{equation*}
		where $d_{max}$ the maximum length of simple path in $\mathcal{G}$.
		
		\begin{lemma}
			When $d_{max}$ has reached to its maximum, the worst case cost of restoring is $O(|B|^3)$.
		\end{lemma}
	
		\begin{proof}
			According to the definition, $d_{max} \le |B| - 1$.
			When $d_{max} = |B| - 1$, all vertices in $\mathcal{G}$ form one single simple path.
			In this case, there can only be one rule in $H$ and only one relation in $B$.
			And the maximum number of arguments is also $|B| - 1$, otherwise the rule cannot summarize $B$.
			Therefore, the worst case cost is:
			$$(|B| - 1) \cdot 1 \cdot |B| \cdot (|B| - 1) = O(|B|^3)$$
			Other cases are the same.
		\end{proof}
	
	\section{Conclusion}\label{sec:conclusion}
		In this paper, we introduced a framework for extracting essence from factual knowledge.
		Theoretical proofs are also given for key properties of the framework.
		To put it into practice, more concrete work is required to design and analyze in \textit{findSingleRule} and \textit{CoverCycle}.
	
	\bibliographystyle{IEEEtran}
	\bibliography{Proof}
\end{document}